\newcommand{\IR}{{\mathbb{R}}}
\newcommand{\ID}{{\mathbb{D}}}
\newcommand{\ZZ}{{\mathbb{Z}}}
\def\le{\leqslant}
\def\ge{\geqslant}
\newtheorem{theorem}{Theorem}
\newtheorem{lemma}{Lemma}[section]
\newtheorem{prop}[lemma]{Proposition}
\theoremstyle{definition}
\let\llldots=\ldots
\def\ldots{\llldots{}}
\numberwithin{equation}{section}
\begin{document}
\title[Self-adjointness for magnetic operators]{On essential self-adjointness for magnetic Schr\"{o}dinger and Pauli operators
 on the  unit disc in $\IR^2$}
\author[Gh.~Nenciu and I.~Nenciu]{Gheorghe Nenciu and Irina Nenciu}
\address{Gheorghe Nenciu\\
         Institute of Mathematics ``Simion Stoilow''
     of the Romanian Academy\\ 21, Calea Grivi\c tei\\010702-Bucharest, Sector 1\\Romania}
\email{Gheorghe.Nenciu@imar.ro}
\address{Irina Nenciu\\
         Department of Mathematics, Statistics and Computer Science\\
         University of Illinois at Chicago\\
         851 S. Morgan Street\\
         Chicago, IL \textit{and} Institute of Mathematics ``Simion Stoilow''
     of the Romanian Academy\\ 21, Calea Grivi\c tei\\010702-Bucharest, Sector 1\\Romania}
\email{nenciu@uic.edu}
%\author{
%Gheorghe Nenciu\\
% Institute of Mathematics of the Romanian Academy\\ P.O. Box
%1-764, RO-014700 Bucharest, Romania\\
%Irina Nenciu\\
%         Department of Mathematics, Statistics and Computer Science\\ 
%         University of Illinois at Chicago\\
%         851 S. Morgan Street,
%         Chicago, IL \\
%and 
%\\
%Institute of Mathematics ``Simion Stoilow''
%     of the Romanian Academy\\ 21, Calea Grivi\c tei\\010702-Bucharest, Sector 1\\Romania}
%\date{\today}
%\maketitle

\begin{abstract}
We study the question of magnetic confinement of  quantum particles on the unit disk $\ID$ in $\IR^2$, i.e. we wish to achieve 
 confinement solely by means of the growth of the magnetic field $B(\vec x)$ near the boundary
 of the disk. In the spinless case we show that 
$B(\vec x)\ge \frac{\sqrt 3}{2}\cdot\frac{1}{(1-r)^2}-\frac{1}{\sqrt 3}\frac{1}{(1-r)^2\ln \frac{1}{1-r}},$
for $|\vec x|$ close to 1, insures the confinement  provided we assume that the non-radially symmetric part
 of the magnetic field is not very singular near the boundary. Both constants $\frac{\sqrt 3}{2}$ and $-\frac{1}{\sqrt 3}$
are optimal. This answers, in this context, an open question from \cite{dVT}. 
We also derive growth conditions for radially symmetric magnetic fields which lead to confinement of spin ${1/2}$ particles.
% In Appendix A we give an integral limit point criterion for one-dimensional Schr\"odinger operator containing most of the well-known  ones.
\end{abstract}
%\thanks{We thank...}
\maketitle

%\tableofcontents

%%%%%%%%%%%%%%%%%%%%%%%%%%%%%%%%%%%%%%%%%%%%%%%%%%%%%%%%%%%%%%%%%%%%%%%%%%%%%%%%%%%%%%%%%%%%%%%%%%
\section{Introduction}\label{S:1}
%%%%%%%%%%%%%%%%%%%%%%%%%%%%%%%%%%%%%%%%%%%%%%%%%%%%%%%%%%%%%%%%%%%%%%%%%%%%%%%%%%%%%%%%%%%%%%%%%%%%%%%

This note is concerned with the problem of  confinement of quantum particles by magnetic fields. At the mathematical level,
 the confinement of quantum particles in a bounded domain $\Omega$ is described by the fact that the corresponding Hamiltonian is 
essentially self-adjoint on $C_0^{\infty}(\Omega)$. The case when the confinement  is due to the presence of an electric field is well
 understood both at the physical and mathematical levels (see  \cite{Bru}, \cite{BMS}, \cite{KSW}, \cite{NenNen} and the references therein):
 a sufficiently fast growth of the electric potential will prevent the particle from reaching the boundary of $\Omega$, leading to confinement.
 Moreover, optimal growth rates are known for the potential close to the boundary $\partial\Omega$, which guarantee essential self-adjointness.

On the contrary, the case in which the confinement is due only to the presence of a magnetic field is much less well-understood; even 
at the physical heuristic level we are not aware of a clear-cut argument justifying confinement. At the mathematical level, it was proved only 
very recently by Colin de Verdi\`ere and Truc \cite{dVT} that, under very general conditions, an inverse square increase of the modulus of the
 magnetic field, $|B(x)|\ge  \frac{C}{\text{dist}(x, \partial \Omega)^2}$, $C>1$, close to the boundary of $\Omega$, leads to confinement 
for spinless particles
(i.e. $H = (-i \nabla -A)^2$ where $A$ is a magnetic potential corresponding to $B$). The main technical ingredient of the proof of confinement in
\cite{dVT} is a lower bound of the quadratic form $h_H$ of $H$ in terms of the magnetic field.

In the rest of this note we shall consider the magnetic confinement problem in the simplest setting  when $\Omega=\ID=\{\vec x\in\IR^2\,|\, |\vec x| < 1\}$ is 
the unit disk in two dimensions. In this setting, the lower bound in \cite{dVT} for $h_H$ is an elementary result saying that, provided $B(\vec x) \ge 0$, 
\begin{equation}\label{glb}
h_H(u,u) \ge  \int_{\Omega} B(\vec x)|u(\vec x)|^2\,dx\,.
\end{equation}
This, together with general results on essential self-adjointness (see \cite{NenNen} and references therein), leads to confinement, as long as close to the boundary of $\Omega$
\begin{equation}\label{C=1}
B(x)\ge  \frac{1}{\text{dist}(x, \partial \Omega)^2}\,.
\end{equation}
As for the optimality of \eqref{C=1}, one can easily give an example (see \cite{dVT}, Theorem 5.8) of a radial magnetic field,
$B(\vec x) \sim \frac{\alpha}{\text{dist}(x, \partial \Omega)^2}$ near the boundary, such that for $\alpha \in (0,\frac{\sqrt 3}{2})$, 
H is not essentially self-adjoint. This raises the question of finding the optimal (i.e. the weakest) increase of 
the magnetic field near the boundary insuring the  essential self-adjointness of $H$. In particular at the level of power like behavior 
the problem left open in \cite{dVT}
 is to find the optimal $C\in [\frac{\sqrt 3}{2},1]$ leading to confinement. Passing to the (most interesting from the 
physical point of view) case of spin 1/2 particles, i.e. when $H$ is replaced by 
$(-i \nabla -A)^2-B$, the problem of confinement is wide open, since \eqref{glb} gives only that $(-i \nabla -A)^2-B \ge 0$, which does not imply confinement,
irrespective of the strength of the magnetic field. The existence of magnetic confinement for spin 1/2 particles is one of the main outcomes of our paper.

In this note we report results about optimal magnetic field increase near the boundary leading to confinement. In the spinless (Schr\"odinger) case, for
$B(\vec x)=B_{rad}(|\vec x|) + B_{1}(\vec x)$ and as long as the non-radially symmetric part of 
the magnetic field, $B_{1}(\vec x)$, is not very singular near the boundary of $\ID$, we prove confinement for
$$ B_{rad}(|\vec x|) \ge \frac{\sqrt 3}{2}\cdot\frac{1}{(1-|\vec x|)^2}-\frac{1}{\sqrt 3}\frac{1}{(1-|\vec x|)^2\ln \frac{1}{1-|\vec x|}}\,.$$
Here both constants in front of the leading and subleading terms are optimal (see Theorem~\ref{PSP} for a 
precise formulation). In particular this settles, for the case at hand, the question left open in 
\cite{dVT}. As for the spin 1/2 (Pauli) case, we prove confinement if the magnetic field is radially symmetric and obeys near $|\vec x|=1$:
$$   \frac{\alpha}{(1-|\vec x|)^2}-\frac{1}{2\alpha}\frac{1}{(1-|\vec x|)^2\ln \frac{1}{1-|\vec x|}} \le B(|\vec x|) 
\le \frac{\beta \alpha}{(1-|\vec x|)^2} $$
where
\begin{equation}\label{alphabeta}
\beta \ge 1, \alpha \ge \frac{\beta +\sqrt{\beta^2 +3}}{2}.
\end{equation}
Notice that, from \eqref{alphabeta}, $\alpha \ge\frac{3}{2}.$
Again the value $\alpha = \frac{3}{2}$ is optimal. By some (tedious) extra work one can add higher order subleading terms of the form
$$
\frac{\text{const.}}{\bigl(1-|x|\bigr)^2 \ln\ln\cdots\ln\frac{1}{1-|x|}}
$$
and determine the corresponding optimal constant.

We wish to comment on the condition that the magnetic field has radial symmetry, which is crucial for our proofs (the non-radially symmetric
 case for spinless particles follows 
from the radially symmetric one by perturbation theory). The point is that, as it stands, the ``global''  lower bound \eqref{glb} seems hard to improve
(if possible at all -- see Remark~4.9 in \cite{dVT}); as already mentioned, this leads to $C \ge 1$. The radial symmetry allows for  partial wave decomposition, and thus reduces the 
 essential self-adjointness problem for the whole operator to the 
one for {\em each} partial wave sector (indexed by the magnetic quantum number $m\in {\mathbb Z}$). 

We would like to stress
that the point of this reduction is not the fact that one ends up with a collection of 1 dimensional problems, for which one uses Weyl limit point criteria; 
the present 
day criteria for essential self-adjointness are almost as powerful in the multi dimensional case as in 1 dimension, 
see  \cite{Bru}, \cite{BMS}, \cite{KSW}, \cite{NenNen} and the references therein (actually the limit point criterion we use is a particular
 of case of the 
multi dimensional result in \cite{NenNen}). What we gain from this decomposition is rather the fact that we are left with the problem of 
proving appropriate lower 
bounds for the effective one dimensional potential in each sector. It turns out that this can be done (see Lemma~\ref{Sest} below),
but note that these bounds  are not uniform in $m$, in
 the sense that they are valid only for $|\vec x| > r_m$ with $\lim_{m \rightarrow \infty}r_m =1.$ 

At the technical level, aside from the results in \cite{NenNen}, the main ingredient is the fact that the 
formula giving the  magnetic vector potential in 
the transversal gauge (see Lemma \ref{L:3.0.1}) allows a ``nice'' transfer of the growth conditions from the magnetic field 
to the corresponding magnetic potential entering the Schr\"odinger or Pauli operators (see Lemma~\ref{Sest}).

The paper is organized as follows: Sections~\ref{S:3} and \ref{S:4} contain our main results and their proofs, respectively. 
The two appendices have very different character:
 Appendix~\ref{A:1} contains  a 1-dimensional version of the essential self-adjointness  
criterion from \cite{NenNen} expressed as a new, integral, limit point criterion. 
Aside from its use in the proof of the main result, this criterion might be of interest in itself as a refinement or easier-to-use version of many of the known limit point criteria (see e.g. Theorem~X.10 in \cite{ReeSim}, Theorem~1 in \cite{H}). 
Finally, Appendix~\ref{S:G} is included for the reader's convenience, as it contains some of the known properties of 
the transversal gauge which we use in our proofs.

%%%%%%%%%%%%%%%%%%%%%%%%%%%%%%%%%%%%%%%%%%%%%%%%%%%%
\section{Set-up of the problem and results}\label{S:3}
%%%%%%%%%%%%%%%%%%%%%%%%%%%%%%%%%%%%%%%%%%%

As already mentioned in the introduction, we will restrict our attention to the case when
\begin{equation}\label{E:3.1}
\Omega=\ID=\{\vec x=(x_1,x_2)\in \IR^2\,|\, x_1^2+x_2^2<1\}\subset \IR^2\,,
\end{equation}
and we will consider a magnetic Schr\"{o}dinger operator, 
\begin{equation}\label{E:3.2}
H^S=\bigl(-i\nabla-A\bigr)^2+q\,,
\end{equation}
and the associated Pauli operator 
\begin{equation}\label{E:3.2.1}
\bigl(-i\nabla-A\bigr)^2+q \pm B
\end{equation}
which appears as the nonrelativistic limit of the corresponding Dirac operator \cite{Th}.
We assume throughout the paper that 
\begin{equation}\label{cond1}
B\ge 0\qquad\text{and}\qquad B\in C^1(\ID) 
\end{equation}
This in particular implies that we need only discuss, for the Pauli operator,  the nontrivial case, 
\begin{equation}\label{E:3.2.2}
H^P=\bigl(-i\nabla-A\bigr)^2+q-B\,.
\end{equation}

An important ingredient in proving essential self-adjointness for our examples is the choice of the \emph{transversal} 
(or \emph{Poincar\'e}) gauge for the magnetic vector potential. So throughout the paper, for a given magnetic field $B(\vec x)$, $A(\vec x)$ 
denotes the corresponding magnetic potential in the transversal gauge. From our results, essential self-adjointness follows for all other gauges 
related to the transversal one by smooth gauge transformations (see e.g. Proposition 2.13 in \cite{dVT}). 
For the definition of the transversal gauge and a few properties used in our proofs, see Lemmas \ref{L:3.0.1} and 
\ref{L:3.0.2} below; more properties can be found, e.g., in \cite{Th} or \cite{N}. 

We are interested in finding conditions on the magnetic field $B(\vec x)$ near $|\vec x|=1$ guaranteeing 
the essential self-adjointness of $H^S$ and $H^P$ in the case when the scalar potential,  $q(\vec x)$, vanishes (or is uniformly bounded) near $|\vec x|=1$. Since we shall make heavy use of polar coordinates, $\vec x =(r,\theta)$, in order not to obscure the main ideas with irrelevant technicalities related with the singularity of the transformation from rectangular to polar coordinates near the origin, we shall consider the essential self-adjointness problem for $H^S$ and $H^P$ on $C_0^{\infty}(\Omega)$
where 
\begin{equation}\label{E:omega}
\Omega = \{ 0<|\vec x|<1\},
\end{equation} 
and
\begin{equation}\label{q}
q(\vec x) = \frac{1}{|\vec x|^2}\,.
\end{equation}
We would like to emphasize the fact that, since $q(\vec x)$ as given by \eqref{q} assures the self-adjointness at $0$ (see \cite{ReeSim}), for a given magnetic field the essential self-adjointness of $H^S$ and $H^P$ on  $C_0^{\infty}(\{|\vec x|<1\})$ with $q=0$ is {\em equivalent} to the essential self-adjointness of $H^S$ and $H^P$ (respectively) on  $C_0^{\infty}(\{0<|\vec x|<1\})$ with $q$ as given by \eqref{q}.

We are now in the position to state the main result of this note. In what follows, $r$ and $\theta$ are the polar coordinates of $\vec x$.
\begin{theorem}\label{PSP}
 (i) Consider the Schr\"odinger operator
$$
H^S=\bigl(-i\nabla-A\bigr)^2+\frac{1}{r^2}
$$
defined on $\mathcal D(H^S)=C_0^{\infty}(\{0<|\vec x|<1\})$, where 
\begin{equation}\label{BS}
B(\vec x) = B_{rad}(r) + B_{1}(\vec x)
\end{equation}
with
\begin{equation}\label{BSr}
 B_{rad}(r) \ge \frac{\sqrt 3}{2}\cdot\frac{1}{(1-r)^2}-\frac{1}{\sqrt 3}\frac{1}{(1-r)^2\ln \frac{1}{1-r}},
\end{equation}
for $r$ close to 1, and
\begin{equation}\label{BS1}
 \int_0^1\!\bigl|B_1(r,\theta)\bigr|\,dr + \int_0^1\!\bigl|\frac{\partial}{d \theta}B_1(r,\theta)\bigr|\,dr < \infty
\end{equation}
uniformly in $\theta \in [0, 2\pi)$.
Then $H^S$ is essentially self-adjoint.

(ii) Consider the Pauli operator
$$
H^P=\bigl(-i\nabla-A\bigr)^2+\frac{1}{r^2} -B(\vec x)
$$
defined on ${\mathcal D}(H^P)=C_0^{\infty}(\{0<|\vec x|<1\})$, where 
\begin{equation}\label{BP}
B(\vec x) = B_{rad}(|\vec x|). 
\end{equation}
Suppose
\begin{equation}
  \beta \ge 1, \alpha \ge \frac{\beta +\sqrt{\beta^2 +3}}{2},
\end{equation}
and
\begin{equation}\label{BPr}
\frac{\alpha}{(1-r)^2}-\frac{1}{2\alpha}\frac{1}{(1-r)^2\ln \frac{1}{1-r}} \le B_{rad}(r) \le \frac{\beta \alpha}{(1-r)^2}, 
\end{equation}
for $r$ close to 1.
Then $H^P$ is essentially self-adjoint. 

(iii) For any $c<\frac{\sqrt 3}{2}$ and $d>\frac{1}{\sqrt 3}$ one can find magnetic fields $ B(r)$ satisfying either
$B(r) \ge c\frac{1}{(1-r)^2}$ or $B(r) \ge  \frac{\sqrt 3}{2}\cdot\frac{1}{(1-r)^2}-d\frac{1}{(1-r)^2\ln \frac{1}{1-r}}$ for which $H^S$
is not essentially self-adjoint.

 For any $c<\frac{ 3}{2}$ and $d>\frac{1}{ 3}$ one can find magnetic fields $ B(r)$ satisfying for  $r$ close to 1 either
$B(r) \ge c\frac{1}{(1-r)^2}$ or $B(r) \ge  \frac{ 3}{2}\cdot\frac{1}{(1-r)^2}-d\frac{1}{(1-r)^2\ln \frac{1}{1-r}}$ for which $H^P$
is not essentially self-adjoint.
\end{theorem}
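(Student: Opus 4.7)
My plan is, following the roadmap sketched in the introduction, to work in the transversal (Poincar\'e) gauge and reduce via partial-wave decomposition to a family of one-dimensional Weyl limit-point problems. By Lemma~\ref{L:3.0.1}, a radial $B_{rad}$ gives $\vec A(r,\theta)=\bigl(\Phi(r)/r\bigr)\hat\theta$ with $\Phi(r):=\int_0^r s\,B_{rad}(s)\,ds$. Decomposing $u(r,\theta)=\sum_m e^{im\theta}v_m(r)$ and then substituting $u_m(r)=\sqrt r\,v_m(r)$ turns the $m$-th channel of $H^S$ (resp.\ $H^P$) into a Schr\"odinger operator on $L^2((0,1),dr)$ with effective potential
\[
V_m^S(r)=\frac{(m-\Phi(r))^2+3/4}{r^2}, \qquad V_m^P(r)=V_m^S(r)-B_{rad}(r).
\]
The $1/r^2$ coming from $q$ handles the endpoint $r=0$, so essential self-adjointness reduces, in each sector, to the Weyl limit-point property at $r=1$.

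\textbf{Transferring the growth conditions.} The key computation relies on the primitives $\int^r\! s(1-s)^{-2}\,ds=(1-r)^{-1}+\ln(1-r)+O(1)$ and $\int^r \! s(1-s)^{-2}\bigl[\ln\tfrac{1}{1-s}\bigr]^{-1}\,ds\sim 1/\bigl((1-r)\ln\tfrac{1}{1-r}\bigr)$. Under \eqref{BSr} these give
\[
\Phi(r)\ge\frac{\sqrt{3}/2}{1-r}-\frac{1/\sqrt{3}}{(1-r)\ln\frac{1}{1-r}}+O(1),
\]
and squaring produces the crucial cross-term $-2\cdot\tfrac{\sqrt{3}}{2}\cdot\tfrac{1}{\sqrt{3}}=-1$, yielding
\[
V_m^S(r)\ge\frac{3/4}{(1-r)^2}-\frac{1}{(1-r)^2\ln\frac{1}{1-r}}+o\!\Bigl(\tfrac{1}{(1-r)^2\ln\frac{1}{1-r}}\Bigr)
\]
for $r>r_m$; the threshold $r_m\to 1$ reflects that $(m-\Phi(r))^2\ge \Phi(r)^2-2|m|\Phi(r)$ only outweighs the subleading terms once $\Phi(r)$ overwhelms $|m|$. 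The Pauli hypotheses \eqref{BPr} are tuned in exactly the same spirit: the algebraic condition $\alpha^2-\beta\alpha\ge 3/4$ (equivalent to $\alpha\ge(\beta+\sqrt{\beta^2+3})/2$) produces the critical leading coefficient $3/4$, while the subleading choice $1/(2\alpha)$, combined with the upper bound $B_{rad}\le\beta\alpha/(1-r)^2$, ensures that the cross term in $\Phi^2$ together with the contribution from $-B_{rad}$ add up to the same universal subleading $-1/((1-r)^2\ln\frac{1}{1-r})$ in $V_m^P$.

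\textbf{Limit point, non-radial perturbation, and sharpness.} With $V_m^\bullet$ bounded below in this borderline form, the integral limit-point criterion of Appendix~\ref{A:1} (whose sharp subleading constant at $r=1$ is exactly $1$) applies in each sector and produces limit point at $r=1$; the direct-sum decomposition then gives essential self-adjointness of $H^S$ and $H^P$ in the radial case. For the non-radial piece of~(i), I would write $\vec A_1$ in the transversal gauge as radial integrals involving $B_1$ and $\partial_\theta B_1$, so that the $L^1_r$-type control of \eqref{BS1} makes $\vec A_1$ a symmetric-form perturbation mild enough to preserve essential self-adjointness of the radial operator. For~(iii) I would reverse the calculation in a single fixed sector (say $m=0$): choosing $B_{rad}$ saturating either the leading or the subleading bound with the strict inequality reversed yields an effective potential that falls below the LPC threshold at $r=1$, producing Weyl limit circle and hence failure of essential self-adjointness by the standard deficiency-index argument. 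The principal obstacle I anticipate is the non-uniformity in $m$: because $r_m\to 1$, the limit-point analysis must be carried out channel-by-channel rather than globally, which is precisely why the integral-type criterion of Appendix~\ref{A:1}—rather than a pointwise Hartman/Kneser comparison—is essential; a secondary technical point is verifying that the $L^1$-integrability \eqref{BS1} translates into enough smallness of $\vec A_1$ to complete the perturbative step in~(i).
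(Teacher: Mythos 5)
Your proposal follows essentially the same route as the paper: transversal gauge via Lemma~\ref{L:3.0.1}, partial-wave decomposition, the Sturm--Liouville substitution $u_m=\sqrt r\,v_m$, the computation of $\Phi(r)=r^2a(r)=\int_0^r uB_{\mathrm{rad}}(u)\,du$ with its $(1-r)^{-1}+\ln(1-r)$ and $((1-r)\ln\frac{1}{1-r})^{-1}$ asymptotics, the cross term $-2\cdot\frac{\sqrt3}{2}\cdot\frac{1}{\sqrt3}=-1$ in the square producing the $-1$ subleading coefficient, the channel-dependent thresholds $r_m\to1$ needed to absorb $|m|$, and the appeal to Lemma~\ref{L:A.1}/Proposition~\ref{P:A.2}. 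The algebra $\alpha^2-\alpha\beta\ge 3/4\iff\alpha\ge(\beta+\sqrt{\beta^2+3})/2$ for the Pauli case also matches the paper's observation $\alpha^2-\alpha\beta+\tfrac14\ge1$.

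The one place your sketch is insufficiently precise to be safe is the non-radial step in (i). You describe $H^S_1$ as ``a symmetric-form perturbation mild enough to preserve essential self-adjointness.'' Form-boundedness by itself does not preserve essential self-adjointness on $C_0^\infty$; it only produces a self-adjoint realization via KLMN. The paper instead establishes \emph{relative operator-boundedness}: from the uniform bound $|a_1|+|\partial_\theta a_1|\le A_1$ (a consequence of \eqref{BS1} and smoothness at the origin) one shows $\|H^S_1\varphi\|\le A_1 d\,\|H^S_{\mathrm{rad}}\varphi\|+C(d)\|\varphi\|$ with $d$ arbitrarily small, and then invokes Kato--Rellich. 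Your idea is salvageable, but you must upgrade ``form perturbation'' to the operator-norm estimate (the nontrivial piece being the cross term $2ra_1P_{\theta,\mathrm{rad}}$, controlled by $\langle\varphi,P^2_{\theta,\mathrm{rad}}\varphi\rangle\le\langle\varphi,H^S_{\mathrm{rad}}\varphi\rangle$). For (iii) your sketch is at the same level of detail as the paper; to complete it you would still need to cite a companion \emph{non}-self-adjointness criterion (the paper uses Theorem~3 of \cite{NenNen} applied to the $m=0$ sector) rather than just asserting ``limit circle.''
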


%%%%%%%%%%%%%%%%%%%%%%%%%%%%%%%%%%%%%%%%%%%%%%%%
\section{Proofs}\label{S:4}
%%%%%%%%%%%%%%%%%%%%%%%%%%%%%%%%%%%%%%%%%%%%%%%%%

%%%%%%%%%%%%%%%%%%%%%%%%%%%%%%%%%
\subsection{Proof of Theorem~\ref{PSP}(i) in the radially symmetric case}\label{Ss1}
We will start by providing growth conditions close to the boundary for the magnetic field in the radially symmetric
 Schr\"odinger case. That means that, in the transversal gauge, the magnetic potential $A$ has the form given by Lemmas~\ref{L:3.0.1} and~\ref{L:3.0.2}:
\begin{equation}\label{E:3.3}
A(r,\theta)=ra(r)\,\bigl(-\sin\theta\quad\cos\theta\bigr)\,,
\end{equation}
with
\begin{equation*}
a(r)=\int_0^1\!\! tB_\text{rad}(tr)\,dt 
\end{equation*}
where $(r,\theta)$ are, as before, the polar coordinates corresponding to the rectangular coordinates $\vec x=(x_1,x_2)$.

We argue now following \cite{ReeSim}, \cite{Cor} (the argument is quite standard but we include it for completeness) that 
the essential self-adjointness of $H^S$ is equivalent with the essential self-adjointness of
\begin{equation}\label{E:3.13}
\tilde H^S_{m}\equiv  -\frac{d^2}{dr^2}+\frac{3}{4r^2}+\left(ra(r)-\frac{m}{r}\right)^2\quad\text{in}\quad L^2((0,1),dr)\,,
\end{equation}
for all $m\in\ZZ$, defined on $C_0^{\infty}((0,1))$.
Indeed, let us note first that, according to Lemma~\ref{L:3.1}, the Schr\"odinger operator written in polar coordinates takes the form 
\begin{equation}\label{E:3.4}
H^S=-\frac1r\,\frac{\partial}{\partial r}\,r\,\frac{\partial}{\partial r}+
\left(-\frac{i}{r}\,\frac{\partial}{\partial\theta}-ra(r)\right)^2+\frac{1}{r^2}
\end{equation}
We now split the operator $H^S$ according to partial waves. For
\begin{equation}\label{E:3.8}
\psi_m(r,\theta)=\varphi(r)\cdot e^{im\theta}\,,
\end{equation} 
we obtain, using \eqref{E:3.4}, that
\begin{equation*}\label{E:3.9}
(H^S \psi_m)(r,\theta) =\left[-\frac{1}{r}\frac{\partial}{\partial r}r\frac{\partial}{\partial r}+q(r)+r^2a(r)^2-
2ma(r)+\frac{m^2}{r^2}\right]\varphi(r)\cdot e^{im\theta}\,
\end{equation*}
acting on the space $L^2((0,1)\times S^1,rdrd\theta)$.
Then by a standard argument (see e.g. \cite{ReeSim}, Appendix XI, Example 4) the essential self-adjointness of $H^S$ 
is equivalent with the essential self-adjointness for all $m\in \mathbb Z$ of
\begin{equation*}\label{E:3.10.1}
H^S_m=-\frac{1}{r}\frac{d}{dr}r\frac{d}{dr}+\frac{1}{r^2}+r^2a(r)^2-2ma(r)+\frac{m^2}{r^2}
\end{equation*}
defined on $C_0^{\infty}((0,1))$ in $L^2((0,1),rdr)$.
In other words, we are interested, for each $m\in\ZZ$, in the operator
$H^S_m$ on the space $L^2((0,1),rdr)$, and more precisely, we want to understand its essential self-adjoiness properties for $r$ close to 1.

Using the general notation and set-up for  Sturm-Liouville transformations from \cite{Cor} (for this particular 
case see also \cite{ReeSim},\cite{dVT}), we define the unitary operator of multiplication with $\gamma (r) =r^{-1/2}$, 
\begin{equation}\label{E:3.11}
\Gamma\,:\, L^2((0,1),dr) \rightarrow L^2((0,1),rdr)\qquad\big(\Gamma \phi\big)(r)=r^{-1/2}\phi (r)\,.
\end{equation}
Then we know from the general theory that
\begin{equation}\label{E:3.12}
\Gamma^{-1}H^S_{m}\Gamma=-\Delta+\tilde q_m,\quad\text{with}\quad \tilde q_m^S=\frac{H^S_{m}\gamma}{\gamma}\,.
\end{equation}
In our situation, we see from \eqref{E:3.10.1} and \eqref{E:3.12} that
\begin{equation}\label{E:3.13a}
\tilde H^S_{m} = \Gamma^{-1}H^S_{m}\Gamma .
\end{equation}
The main point of \eqref{E:3.13a} is that $C_0^{\infty}((0,1))$ is invariant under $\Gamma$, so $\tilde H^S_{m}$
is essentially self-adjoint on $C_0^{\infty}((0,1))$ if and only if $H^S_{m}$ is essentially self-adjoint on $C_0^{\infty}((0,1))$.
So now we need to look at which growth conditions on $B$ lead to situations for which 1 is a limit-point of 
$$
\tilde H^S_{m}=-\frac{d^2}{dr^2}+\frac{3}{4r^2}+\frac{1}{r^2}\left(r^2a(r)-m\right)^2
$$
with domain $\mathcal D(\tilde H^S_m)=C_0^\infty\bigl((0,1)\bigr)$, $m\in\ZZ$. Recall that this depends only on the growth 
rate close to 1 of the potential
\begin{equation}\label{E:3.14}
\tilde q_m^S(r)=r^2a(r)^2-2ma(r)+\frac{4m^2+3}{4r^2}\,.
\end{equation}

To describe the growth condition on the magnetic field, we consider the ``critical'' magnetic field
\begin{equation}\label{E:3.15}
B_c(r)=
\begin{cases}
\frac{\sqrt 3}{2}\frac{1}{(1-r)^2}-\frac{1}{\sqrt 3}\frac{1}{(1-r)^2\ln \frac{1}{1-r}}, & \text{for $1-r \le e^{-4}$,}\\
0, & \text{ otherwise}
\end{cases}
\end{equation}
\begin{prop}\label{L:3.2}
Suppose that 
\begin{equation}\label{E:3.16}
B(r)\ge B_c(r)\,.
\end{equation}
Then $\tilde H^S_m$ is essentially self-adjoint for all $m\in\ZZ$.
\end{prop}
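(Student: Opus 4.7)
The plan is to leverage the partial-wave decomposition already carried out and reduce each $\tilde H^S_m$ to a one-dimensional Schr\"odinger operator on $(0,1)$ whose essential self-adjointness at $r=1$ is settled by the integral limit-point criterion of Appendix~\ref{A:1}. Since that criterion has critical threshold $\frac{3}{4(1-r)^2}$ with a critical log correction of coefficient $1$, the entire proof reduces to producing a matching lower bound for the effective potential $\tilde q^S_m$ near $r=1$, for each $m\in\ZZ$.

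First I would use the transversal-gauge formula from Lemma~\ref{L:3.0.2} in the convenient integrated form $r^2 a(r)=\int_0^r s\,B_{\text{rad}}(s)\,ds$, which transfers the pointwise growth hypothesis on $B_{\text{rad}}$ to the quantity actually appearing in $\tilde q^S_m$. Integrating $B_{\text{rad}}\ge B_c$ term by term --- the leading $\frac{\sqrt 3}{2(1-s)^2}$ integrates to $\frac{\sqrt 3}{2(1-r)}$ up to $O\bigl(\ln\tfrac{1}{1-r}\bigr)$, and the logarithmic subleading term (via the substitution $u=1-s$ and a short L'H\^opital-type asymptotic for $\int du/(u^2\ln(1/u))$) yields the asymptotic $\frac{1}{\sqrt 3\,(1-r)\ln\frac{1}{1-r}}$ --- produces
\[ r^2 a(r)\ \ge\ \frac{\sqrt 3}{2(1-r)} - \frac{1}{\sqrt 3\,(1-r)\ln\frac{1}{1-r}} + O\!\bigl(\ln\tfrac{1}{1-r}\bigr). \]

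Next I would plug this into the factorized expression
\[ \tilde q^S_m(r)\ =\ \frac{3}{4r^2} + \frac{1}{r^2}\bigl(r^2 a(r) - m\bigr)^2. \]
Since $r^2 a(r)\to\infty$, for each fixed $m\in\ZZ$ there is $r_m<1$ with $r^2 a(r)\ge |m|$ on $[r_m,1)$; on this range $(r^2 a(r)-m)^2 \ge (r^2 a(r))^2 - 2|m|\,r^2 a(r)$. Squaring the bound from the previous step, the leading term contributes $\frac{3}{4(1-r)^2}$, the cross product between $\frac{\sqrt 3}{2}$ and $-\frac{1}{\sqrt 3}$ contributes exactly $-\frac{1}{(1-r)^2\ln\frac{1}{1-r}}$, and every remaining contribution --- including the $m$-dependent cross term $\frac{2|m|\,r^2 a(r)}{r^2} = O\bigl(|m|/(1-r)\bigr)$, which for $r$ close enough to $1$ is strictly smaller than the log correction since $|m|(1-r)\ln\frac{1}{1-r}\to 0$ --- is of strictly smaller order. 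Hence
\[ \tilde q^S_m(r)\ \ge\ \frac{3}{4(1-r)^2} - \frac{1}{(1-r)^2\ln\frac{1}{1-r}} + o\!\left(\frac{1}{(1-r)^2\ln\frac{1}{1-r}}\right) \]
for $r$ close enough to $1$ (with the threshold depending on $m$).

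Finally I would invoke the integral limit-point criterion of Appendix~\ref{A:1} at $r=1$: the comparison potential $\frac{3}{4(1-r)^2} - \frac{1}{(1-r)^2\ln\frac{1}{1-r}}$ sits exactly at the sharp threshold the criterion can accommodate, so the lower bound above guarantees limit-point behavior, and hence essential self-adjointness of each $\tilde H^S_m$ on $C_0^\infty\bigl((0,1)\bigr)$; the endpoint at $0$ is irrelevant because $\tilde q^S_m$ is bounded below away from $1$ (or, equivalently, the test functions are supported away from $0$). The main obstacle I anticipate is the precise propagation of the numerical constants: the value $\frac{\sqrt 3}{2}$ in the $B$-scale must square and integrate to exactly $\frac34$, and the value $\frac{1}{\sqrt 3}$ in the logarithmic correction of $B$ must, after being doubled by the cross-product in the square, match exactly the coefficient $1$ permitted by the limit-point criterion. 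Any slack in this bookkeeping would sacrifice optimality, so verifying that no error terms --- the $m$-dependent cross term, the squared log correction, or the $O\bigl(\ln\tfrac{1}{1-r}\bigr)$ integration remainder --- contaminate the critical coefficients is the delicate part of the proof.
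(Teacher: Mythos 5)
Your overall plan follows the paper's proof almost step by step: partial-wave decomposition, transferring the growth of $B$ to $r^2a(r)$ via the transversal gauge, squaring to obtain a lower bound on the effective potential $\tilde q_m^S$, and invoking the limit-point criterion of Appendix~\ref{A:1}. The issue is that the final invocation is logically incomplete precisely where you yourself flag the difficulty, and the gap is not merely a matter of bookkeeping but a genuine failure of the argument as stated.

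The problem is the $o\bigl(\tfrac{1}{(1-r)^2\ln\frac{1}{1-r}}\bigr)$ error term. At the critical threshold $\frac{3}{4(1-r)^2}-\frac{1}{(1-r)^2\ln\frac{1}{1-r}}$, the limit-point/limit-circle dichotomy is decided by the sub-subleading corrections, and a one-sided bound with an unspecified negative error of relative size $o(1)$ does \emph{not} imply the limit-point conclusion. Concretely, in Lemma~\ref{L:A.1} you need $\tilde q_m^S(r)+\tfrac{1}{4(1-r)^2}\ge (G'(1-r))^2$ for some $G$ with the divergence property \eqref{E:A3} (equivalently \eqref{E:A5}). If one perturbs the critical $G'(t)=\tfrac1t-\tfrac{1}{2t\ln\frac1t}$ by $-f(t)$ with $f\ge 0$, then \eqref{E:A5} still holds iff $\int_0 f(u)\,du<\infty$. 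Your $o(\cdot)$ allows, e.g., an error of order $\tfrac{1}{(1-r)^2\ln\frac{1}{1-r}\,\ln\ln\frac{1}{1-r}}$, which after division by $\tfrac{2}{1-r}$ corresponds to $f(t)\sim\tfrac{1}{t\ln\frac1t\,\ln\ln\frac1t}$ --- and $\int_0 f=\infty$, so the criterion fails. The paper's Lemma~\ref{Sest} avoids this by keeping the error explicit: after absorbing the $m$-term and the $\ln\frac{1}{1-r}$ integration remainder, the deficit in $G'$ is exactly $\tfrac{4}{t\ln^2\frac1t}$, whose integral near $0$ is finite, so the choice \eqref{E:3.22} satisfies \eqref{E:A3}. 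Your proposal would need to reproduce this quantitative step; as written, the inference from ``lower bound up to $o(\cdot)$'' to ``limit-point'' is not valid.

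A smaller point: your justification for discarding the endpoint $r=0$ (``$\tilde q_m^S$ is bounded below away from $1$'' or ``test functions are supported away from $0$'') is incorrect. Boundedness below near a finite endpoint does not imply limit-point (the free Laplacian on $(0,1)$ is limit-circle at both ends), and the support condition is irrelevant --- essential self-adjointness on $C_0^\infty((0,1))$ requires limit-point behavior at \emph{both} endpoints. The correct reason, which is also the one used in the paper, is that $\tilde q_m^S(r)\ge \tfrac{3}{4r^2}$, which is precisely the inverse-square threshold guaranteeing limit-point at $0$ (Theorem~X.10 in \cite{ReeSim}).
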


\begin{proof}
According to Weyl theory (see Theorem~X.7 in \cite{ReeSim}), we have to verify that $\tilde H^S_m$ is limit point at 0 and 1. Since
\begin{equation}\label{E:3.17}
\tilde q_m^S(r)=\frac{3}{4r^2}+\frac{1}{r^2}\bigl(r^2a(r)-m\bigr)^2\ge \frac{3}{4r^2}
\end{equation}
it follows from classical results (e.g., Theorem~X.10 in \cite{ReeSim}) that $\tilde H^S_m$ is limit point at 0.

Concerning the situation at 1, 
 the needed growth rate  close to 1 of $\tilde q_m^S(r)$
is provided by the following technical lemma.
\begin{lemma}\label{Sest}
 For each $m \in \mathbb{Z} $ there exist $r_m <1$ such that for $r_m <r<1$:
\begin{equation}\label{condA1}
 \frac{1}{4(1-r)^2}+ \tilde q_m^S(r)\ge 
\left(\frac{1}{1-r}-\frac{1}{2(1-r)\ln\frac{1}{1-r}}-\frac{4}{(1-r)\ln^2\frac{1}{1-r}}
\right)^2
\end{equation}
and
\begin{equation}\label{condA2}
 \frac{1}{2(1-r)} \le \frac{1}{1-r}-\frac{1}{2(1-r)\ln\frac{1}{1-r}}-\frac{4}{(1-r)\ln^2\frac{1}{1-r}}
\end{equation}
\end{lemma}
Taking Lemma \ref{Sest} for granted one can finish the proof of Proposition \ref{L:3.2}. 
Indeed, choosing the function $G$ to be (for small enough $t$) of the form
\begin{equation}\label{E:3.22}
G(t)=\ln t+ \frac12\ln\ln\frac1t + \int_t\!\,\frac{4}{(1-u)\ln^2\frac{1}{1-u}}du\,,
\end{equation}
one can apply directly Lemma~\ref{L:A.1}, which gives that $\tilde H^S_m$ is limit point at 1.
\end{proof}

We turn now to the proof of Lemma \ref{Sest}. 
\begin{proof}[Proof of Lemma~\ref{Sest}]
Let $1-r_0=e^{-4}$. A finite number of constants appearing 
during the proof will be denoted by the same letter $C$. From  \eqref{E:3.2.6}, \eqref{E:3.15} and \eqref{E:3.16} for $r>r_0$
\begin{equation}\label{BrBc}
\begin{aligned}
r^2a(r)=& r^2\int_0^1\!\! tB_\text{rad}(tr)\,dt=\int_{0}^r\!\! uB_\text{rad}(u)\,du \ge \int_{r_0}^r\!\! uB_c(u)\,du \\
& = \frac{\sqrt{3}}{2} \int_{r_0}^r \!\!\frac{u}{(1-u)^2}\,du-\frac{1}{\sqrt{3}} \int_{r_0}^r\!\!\frac{u}{(1-u)^2\ln\frac{1}{1-u}}\,du.
\end{aligned}
\end{equation}
The first term on the r.h.s. of \eqref{BrBc} gives 
\begin{equation}\label{main}
 \frac{\sqrt{3}}{2} \left(\frac{1}{1-r}-\ln\frac{1}{1-r}\right) +C.
\end{equation}
We estimate now the second term on the r.h.s. of \eqref{BrBc}. Integration by parts gives
\begin{equation}\label{Bcln1}
\begin{aligned}
\int_{r_0}^r\!\!\frac{u}{(1-u)^2\ln\frac{1}{1-u}}&\,du \le \int_{r_0}^r\!\!\frac{1}{(1-u)^2\ln\frac{1}{1-u}}\,du \\
&= \frac{1}{(1-r)\ln\frac{1}{1-r}}+\int_{r_0}^r\!\!\frac{1}{(1-u)^2\ln^2\frac{1}{1-u}}\,du +C.
\end{aligned}
\end{equation}
Integrating once again by parts and taking into account that for $r>r_0$, $1-r <e^{-4}$:
\begin{equation*}
\begin{aligned}
 \int_{r_0}^r\!\!\frac{1}{(1-u)^2\ln^2\frac{1}{1-u}}\,du& =
\frac{1}{(1-r)\ln^2\frac{1}{1-r}} +2\int_{r_0}^r\!\!\frac{1}{(1-u)^2\ln^3\frac{1}{1-u}}\,du +C \\
&\le \frac{1}{(1-r)\ln^2\frac{1}{1-r}} +\frac{1}{2}\int_{r_0}^r\!\!\frac{1}{(1-u)^2\ln^2\frac{1}{1-u}}\,du +C
\end{aligned} 
\end{equation*}
which gives
\begin{equation}\label{ln2}
\int_{r_0}^r\!\!\frac{1}{(1-u)^2\ln^2\frac{1}{1-u}}\,du \le  \frac{2}{(1-r)\ln^2\frac{1}{1-r}} +C.
\end{equation}
From \eqref{Bcln1} and \eqref{ln2}
\begin{equation}\label{sublead} 
\int_{r_0}^r\!\!\frac{u}{(1-u)^2\ln\frac{1}{1-u}}\,du \le \frac{1}{(1-r)\ln\frac{1}{1-r}}+\frac{2}{(1-r)\ln^2\frac{1}{1-r}} +C.
\end{equation}
Putting together \eqref{BrBc}, \eqref{main} and \eqref{sublead} one obtains
\begin{equation}\label{argeq}
\begin{aligned}
 r^2a(r) &\ge  \frac{\sqrt{3}}{2} \frac{1}{(1-r)}-\frac{1}{\sqrt{3}}\frac{1}{(1-r)\ln\frac{1}{1-r}}\\
&\qquad-\frac{2}{(1-r)\ln^2\frac{1}{1-r}}-\frac{\sqrt{3}}{2}\ln\frac{1}{1-r}+C\,.\\
\end{aligned}
\end{equation}
Choose $r_m \ge r_0$ such that
\begin{equation}\label{rm}
 \frac{1}{(1-r)\ln^2\frac{1}{1-r}} \ge \frac{\sqrt{3}}{2}\ln\frac{1}{1-r} +m -C.
\end{equation}
Then for $r>r_m$, from \eqref{argeq}
\begin{equation}\label{ar-m}
  r^2a(r) -m \ge \frac{\sqrt{3}}{2} \frac{1}{(1-r)}-\frac{1}{\sqrt{3}}\frac{1}{(1-r)\ln\frac{1}{1-r}} 
-\frac{3}{(1-r)\ln^2\frac{1}{1-r}} \ge 0
\end{equation}
Since (see \eqref{E:3.17}) $\tilde q_m^S(r)\ge (r^2a(r) -m)^2 $, from \eqref{ar-m} one can
 check by direct computation (use $\ln\frac{1}{1-r}\ge 4$) that \eqref{condA1} and \eqref{condA2} hold true.
 \end{proof}

%%%%%%%%%%%%%%%%%%%%%%%%%%%%%
\subsection{Proof of Theorem~\ref{PSP}(i) in the nonradial case}\label{Ss2}
Write (see \eqref{E:3.2.4} and \eqref{E:3.2.6})
\begin{equation}\label{aa1}
a(r,\theta)=a_{\text{rad}}(r)+a_1(r,\theta)
\end{equation}
where
\begin{equation}\label{a1}
a_1(r,\theta)=\int_0^1\! tB_1(t\vec x)\,dt,\quad a_{\text{rad}}(r)=\int_0^1\!tB_{\text{rad}}(tr)\,dt\,.
\end{equation}
From \eqref{a1} and \eqref{BS1} it follows that $a_1$ and $\frac{\partial a_1}{\partial\theta}$ are uniformly bounded:
\begin{equation}\label{A1}
\sup_{|\vec x|<1}\left\{\big|a_1(r,\theta)\big|+\Big|\frac{\partial a_1}{\partial\theta}(r,\theta)\Big|\right\}\le A_1<\infty\,.
\end{equation}
Here recall (see \eqref{cond1}) that we always assume that our magnetic fields, in particular $B_1$, are $C^1$-smooth 
on the whole unit disk, including at 0. This is needed to justify the uniformity of the bound \eqref{A1} as $r\to 0$.

Plugging \eqref{aa1} into \eqref{E:3.2.7} and expanding, one obtains
\begin{equation}\label{HS1}
H^S=H^S_{\text{rad}}+H^S_1\,,
\end{equation}
where
\begin{equation}\label{Hrad}
H^S_\text{rad}=-\frac1r\,\frac{\partial}{\partial r}\,r\,\frac{\partial}{\partial r}+P_{\theta,\text{rad}}^2+\frac{1}{r^2}
\end{equation}
and
\begin{equation}\label{H1}
\begin{aligned}
H^S_1
&=ra_1(r,\theta)P_{\theta,\text{rad}}+P_{\theta,\text{rad}}ra_1(r,\theta)+r^2a_1(r,\theta)^2\\
&=2ra_1(r,\theta)P_{\theta,\text{rad}}-i\frac{\partial a_1}{\partial\theta}(r,\theta)+r^2a_1(r,\theta)^2\,,
\end{aligned}
\end{equation}
with
\begin{equation}\label{Prad}
P_{\theta,\text{rad}}=-\frac{i}{r}\,\frac{\partial}{\partial\theta}+ra_\text{rad}(r)\,.
\end{equation}

Notice that both $H^S_\text{rad}$ and $H^S_1$ are symmetric on $C^\infty_0(\{0<|\vec x|<1\})$ and, from subsection~\ref{Ss1}, $H^S_\text{rad}$ 
is essentially self-adjoint. We show now that $H^S_1$ is relatively bounded with respect to $H^S_\text{rad}$, which will complete this part of the proof.

By \eqref{A1}, the last two terms on the right-hand side of \eqref{H1} are bounded, and so we need only consider the first term. Let 
$$
\varphi\in C_0^\infty(\{0<|\vec x|<1\}) \,.
$$
Then
\begin{equation*}
\begin{aligned}
\bigl\|2ra_1(r,\theta)P_{\theta,\text{rad}}\varphi\bigr\|^2 
&\le 2A_1^2\langle\varphi,P^2_{\theta,\text{rad}}\varphi\rangle\le 2A_1^2\langle\varphi, H^S_\text{rad}\varphi\rangle\\
&\le 2A_1^2\|\varphi\|\cdot\|H^S_\text{rad}\varphi\|\le \frac{A_1^2}{d^2}\|\varphi\|^2+A_1^2d^2\|H^S_\text{rad}\varphi\|^2\\
&\le A_1^2\left(\frac{\|\varphi\|}{d}+d\|H^S_\text{rad}\varphi\|\right)^2\,,
\end{aligned}
\end{equation*}
where we used the general fact that $2ab\le a^2+b^2$. Putting all together yields
\begin{equation*}
\|H^S_1\varphi\|\le A_1d\|H^S_\text{rad}\varphi\|+\left(\frac{A_1}{d}+A_1+A_1^2\right)\|\varphi\|\,,
\end{equation*}
which leads to the needed bound when $d$ is chosen small enough. The essential self-adjointness of $H^S$ then follows from the stability of
 essential self-adjointness against relatively bounded perturbations (see e.g. \cite{Ka}, \cite{ReeSim}).

%%%%%%%%%%%%%%%%%%%%%%%%%%%%%%%%%%%%
\subsection{Proof of Theorem~\ref{PSP}(ii)}\label{Ss3}

By the same argument as in the radially symmetric Schr\"odinger case, one is reduced to the proof of essential self-adjointness of
\begin{equation*}
\tilde H^P_m=-\frac{d^2}{dr^2}+\frac{3}{4r^2}+\frac{1}{r^2}\bigl(r^2a_\text{rad}(r)-m\bigr)^2-B_\text{rad}(r)
=-\frac{d^2}{dr^2}+\tilde q^P_m(r)\,.
\end{equation*}

Let $r_{\alpha}$ defined by $1-r_{\alpha}=e^{-2(\alpha +1)}$.
Defining
\begin{equation}\label{BcP}
B_{c,\alpha}(r)=
\begin{cases}
\frac{\alpha}{(1-r)^2}-\frac{1}{2\alpha}\frac{1}{(1-r)^2\ln \frac{1}{1-r}}, & \text{for $1-r \le e^{-2(\alpha +1)}$,}\\
0, & \text{ otherwise}
\end{cases}
\end{equation}
and mimicking closely the proof of Lemma \ref{Sest} one obtains 
\begin{equation}\label{Pargeq}
 r^2a_\text{rad}(r) \ge   \frac{\alpha}{(1-r)}-\frac{1}{2\alpha}\frac{1}{(1-r)\ln\frac{1}{1-r}}
-\frac{1}{(1-r)\ln^2\frac{1}{1-r}}-\alpha\ln\frac{1}{1-r}+C(\alpha).
\end{equation}
Choose $r_{m,\alpha} \ge r_\alpha$ such that
\begin{equation}\label{Prm}
 \frac{1}{(1-r)\ln^2\frac{1}{1-r}} \ge \frac{\sqrt{3}}{2}\ln\frac{1}{1-r} +m -C(\alpha).
\end{equation}
Then for $r>r_{m,\alpha}$ , from \eqref{Pargeq}
\begin{equation}\label{Par-m}
  r^2a_\text{rad}(r) -m \ge \frac{\alpha}{(1-r)}-\frac{1}{2\alpha}\frac{1}{(1-r)\ln\frac{1}{1-r}}
-\frac{2}{(1-r)\ln^2\frac{1}{1-r}}
 \ge 0.
\end{equation}
 From \eqref{BPr},
\begin{equation*}
\frac{1}{4(1-r)^2}+\tilde q^P_m(r)\ge\bigl(r^2a_\text{rad}(r)-m\bigr)^2-\frac{\alpha\beta-\frac14}{(1-r)^2}.
\end{equation*}
Then from \eqref{Par-m} one can again verify directly (notice that from \eqref{alphabeta}, 
$\alpha \ge \frac{3}{2} $ and $\alpha^2-\alpha\beta+\frac14\ge1$, and that $\ln\frac{1}{1-r_\alpha}=2(\alpha+1)$) that
\begin{equation}\label{condAP1}
 \frac{1}{4(1-r)^2}+\tilde q^P_m(r)\ge 
\left(\frac{1}{1-r}-\frac{1}{2(1-r)\ln\frac{1}{1-r}}-\frac{2(\alpha +1)}{(1-r)\ln^2\frac{1}{1-r}}
\right)^2
\end{equation}
and
\begin{equation}\label{condAP2}
 \frac{1}{2(1-r)} \le \frac{1}{1-r}-\frac{1}{2(1-r)\ln\frac{1}{1-r}}-\frac{2(\alpha+1)}{(1-r)\ln^2\frac{1}{1-r}}.
\end{equation}
 A direct application of Lemma~\ref{L:A.1} with
\begin{equation}\label{GP}
G(t)=\ln t+ \frac12\ln\ln\frac1t + \int_t\!\,\frac{2(\alpha+1)}{(1-u)\ln^2\frac{1}{1-u}}\,du\,,
\end{equation}
completes the proof of Theorem~\ref{PSP}(ii).
\subsection{Proof of Theorem~\ref{PSP}(iii)}

 For $B(r) \ge c\frac{1}{(1-r)^2}$  in the Schr\"odinger case  see Theorem 5.8 in \cite{dVT}. 
For the optimality of the constant in front of the subleading term choose
\begin{equation}\label{aopt}
a(r)=
\begin{cases}
\frac{\sqrt 3}{2}\frac{1}{1-r}-\frac12(d+\frac{1}{\sqrt 3})\frac{1}{(1-r)\ln \frac{1}{1-r}}-e
(\frac{1}{\sqrt 3}-\frac{d}{2}), & \text{for $1-r \le \frac{1}{e}$,}\\
0, & \text{ otherwise}
\end{cases}
\end{equation}
and verify that the corresponding magnetic field  has the right behavior as $r\rightarrow 1$.
At the same time for $r$ sufficiently close to one from \eqref{aopt} and \eqref{E:3.14}
\begin{equation}
\tilde q_0^S (r) = r^2a(r)^2+\frac{3}{4r^2}\ge \frac{3}{4}\frac{1}{(1-r)^2}-\frac{d\sqrt 3+1}{2}\cdot\frac{1}{(1-r)^2\ln \frac{1}{1-r}},
\end{equation}
and since $d\sqrt 3 >1$ one can apply Theorem 3 in \cite{NenNen} to $\tilde H_m^S$. The proof for the Pauli case is similar.

\appendix

%%%%%%%%%%%%%%%%%%%%%%%%%%%%%%%%%%%%%%%%%%%%%%%%%%%
\section{Background} \label{A:1}
%%%%%%%%%%%%%%%%%%%%%%%%%%%%%%%%%%%%%%%%%%%%%%%%%%

In this appendix, we give the particular case we need of the main theorem in \cite{NenNen}, in a form best adapted to its application in this paper.

\begin{lemma}\label{L:A.1}
Let
\begin{equation}\label{E:A1}
\bigl(H\varphi\bigr)(x)=-\varphi''(x)+V(x)\varphi(x)\qquad\text{on}\quad (0,1)
\end{equation}
with $V$ a continuous potential. Assume that $V=V_1+V_2$, with $V_2$ uniformly bounded and
\begin{equation}\label{E:A2}
V_1(x)+\frac{1}{4(1-x)^2}\ge \left(G'(1-x)\right)^2\qquad\text{for}\quad x\in(1/2,1)
\end{equation}
with $G\,:\,(0,1/2)\rightarrow \IR$ differentiable and satisfying:\\
i. There exists $0<d_0<1/2$ such that 
$$
0\le G'(t)\le\frac1t\,\,\,\text{for}\,\,\, t\in(0,d_0)\,\,\,\text{and}\,\,\, G'(t)=0\,\,\,\text{for}\,\,\, t\ge d_0\,;
$$
ii. For any $\rho_0\le d_0/2$, 
\begin{equation}\label{E:A3}
\sum_{n=1}^\infty 4^{-n} e^{-2G(2^{-n}\rho_0)}=\infty\,.
\end{equation}
Then $H$ is limit-point at 1.
\end{lemma}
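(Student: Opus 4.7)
The plan is to derive Lemma~\ref{L:A.1} as a one-dimensional specialization of the essential self-adjointness criterion proved in \cite{NenNen}, with the right endpoint $x = 1$ playing the role of the singular boundary to which that general result applies. As a preliminary reduction, limit point at $x = 1$ depends only on the behavior of $V$ in a punctured left neighborhood of $1$ and is stable under bounded perturbations, so I may absorb $V_2$ and change variables $t = 1 - x$ to reduce to showing limit point at $t = 0$ for $-\partial_t^2 + \tilde V_1(t)$, where $\tilde V_1(t) = V_1(1-t)$ satisfies $\tilde V_1(t) + \frac{1}{4t^2} \ge (G'(t))^2$ near $t = 0$.

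The underlying heuristic, which guides the rigorous argument, is of WKB type: near $t = 0$ the equation $-u'' + \tilde V_1 u = 0$ has approximate solutions of the form $\sqrt t\, e^{\pm G(t)}$, so the $L^2$-norm of the decaying solution near $0$ is comparable to the integral of $t e^{-2G(t)}$, and on dyadic scales $t_n = 2^{-n}\rho_0$ this integral is comparable to $\sum 4^{-n} e^{-2G(t_n)}$. Condition (ii) is exactly the divergence of this series, ensuring that even the decaying WKB solution fails to be $L^2$. The strategy of \cite{NenNen} is to convert this into a rigorous Agmon-type argument by using $e^{G(t)}$ as a weight.

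Concretely, for any $u$ in the deficiency subspace at $t = 0$, I would set $v = e^{G} u$, multiply $(-\partial_t^2 + \tilde V_1 - z) u = 0$ by $e^{2G} \bar u$, and integrate by parts. After taking real parts this yields the identity
\begin{equation*}
\int |v'|^2 \, dt + \int \bigl(\tilde V_1 - (G')^2 - \Re z\bigr)|v|^2\, dt = \text{boundary terms}.
\end{equation*}
The pointwise hypothesis gives $\tilde V_1 - (G')^2 \ge -\frac{1}{4t^2}$, and the half-line Hardy inequality applied to dyadic cutoffs $\chi_n v$ (vanishing at both ends of their support $[t_{n+1}, t_{n-1}]$) compensates this $-\frac{1}{4t^2}$ loss exactly. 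The remaining error terms on the $n$th shell are of order $4^{-n} e^{-2G(t_n)}$, and condition (ii) says these do not sum, forcing $v \equiv 0$ near $t = 0$ and hence $u \equiv 0$, which gives the required limit point property.

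The main obstacle is that the $\frac{1}{4t^2}$ in the hypothesis is the \emph{critical} Hardy weight at $t = 0$, so the Agmon argument operates at threshold with no slack to absorb lower-order errors; this is exactly why a single Agmon weight is insufficient and the dyadic scale-by-scale construction of \cite{NenNen} is required. Condition (i), $0 \le G'(t) \le 1/t$, is the complementary bound ensuring that $(G')^2 \le 4 \cdot \frac{1}{4t^2}$, so the critical balance is not violated from above either. Once these estimates are in place, the main theorem of \cite{NenNen} applies verbatim.
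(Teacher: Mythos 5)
Your proposal is in the right ballpark but misses the one genuinely clever step in the paper's proof. The paper does \emph{not} re-derive the Agmon-weight machinery in one dimension. Instead it notices a short reduction to the black box: replace $V$ by the symmetrized potential $\tilde V(x)=V(x)$ for $x\ge 1/2$ and $\tilde V(x)=V(1-x)$ for $x<1/2$, so that the growth hypothesis \eqref{E:A2}--\eqref{E:A3} holds near \emph{both} endpoints of $(0,1)$. Then Theorem~1 of \cite{NenNen} applies directly to $\tilde H=-d^2/dx^2+\tilde V$ on $C_0^\infty((0,1))$ and gives essential self-adjointness, and Theorem~X.7 of \cite{ReeSim} converts that to limit point at both endpoints; since limit point at $1$ depends only on $V$ near $1$, where $V=\tilde V$, the conclusion follows. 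The symmetrization is what makes the citation licit: the theorem in \cite{NenNen} is an essential self-adjointness criterion on a domain, not a limit-point criterion at a single endpoint, so one cannot invoke it ``verbatim'' with $\{1\}$ playing the role of the whole boundary, which is what your opening paragraph and closing sentence suggest.

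The body of your proposal instead sketches the \emph{proof} of that theorem adapted to a half-line: the Agmon weight $e^{G}$, the integration-by-parts identity, the dyadic shells $t_n=2^{-n}\rho_0$, the critical Hardy weight, and the divergence of $\sum 4^{-n}e^{-2G(t_n)}$. That is the correct engine inside the black box, and in principle one could finish the lemma this way, but then you must actually carry the estimates through: establish the boundary-term control on each shell, justify that the shell errors are $O(4^{-n}e^{-2G(t_n)})$, and make precise why divergence of the series excludes a square-integrable deficiency solution (the clean statement is that the putative $L^2$ solution would force $\int_0 t\,e^{-2G(t)}\,dt<\infty$, contradicting \eqref{E:A3}; ``forcing $v\equiv0$'' is stronger than what the standard argument gives and is not needed). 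As written, your argument is a plausible sketch of a much longer route, and it ends with an appeal to the theorem ``verbatim'' that is logically inconsistent with having just re-run its proof. Either cite the theorem after symmetrizing, as the paper does, or drop the citation and fill in the Agmon estimates in full; you cannot do half of each.
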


\begin{proof}
Let
\begin{equation*}
\tilde V(x)=\begin{cases}
V(x), & \text{for}\,\,\, x\ge\frac12;\\
V(1-x), &\text{for}\,\,\, x<\frac12.
\end{cases}
\end{equation*}
Then the conditions of Theorem~1 in \cite{NenNen} are 
fulfilled for $\tilde H=-\frac{d^2}{dx^2}+\tilde V$ on $\Omega=(0,1)$, $\mathcal D(\tilde H)=C_0^\infty(\Omega)$, so that $\tilde H$ is 
essentially self-adjoint. Thus, by Theorem~X.7 in \cite{ReeSim} it must be limit-point at 1.
\end{proof}

Note that the crucial growth condition for the potential near $x=1$ is \eqref{E:A3}, but that looks somewhat unfamiliar. In fact, it is 
equivalent to an integral type condition, at least in the case where we replace $G'(t)\ge0$ by $G'(t)\ge1/(2t)$. 
The following integral limit-point criterion, which is of interest in itself as a refinement or easier-to-use version of many of the known limit-point criteria 
(see, e.g., Theorem~X.10 in \cite{ReeSim}), is a direct consequence of Lemma~\ref{L:A.1}:
\begin{prop}\label{P:A.2}
Let
\begin{equation*}
\bigl(H\varphi\bigr)(x)=-\varphi''(x)+V(x)\varphi(x)\qquad\text{on}\quad (0,1)
\end{equation*}
with $V$ a continuous potential. Assume that $V=V_1+V_2$, with $V_2$ uniformly bounded and
\begin{equation*}
V_1(x)+\frac{1}{4(1-x)^2}\ge \bigl(G'(1-x)\bigr)^2\qquad\text{for}\quad x\in(1/2,1)
\end{equation*}
with $G\,:\,(0,1/2)\rightarrow \IR$ differentiable and satisfying:\\
i. There exists $0<d_0<1/2$ such that 
\begin{equation}\label{E:A4}
\frac{1}{2t}\le G'(t)\le\frac1t\,\,\,\text{for}\,\,\, t\in(0,d_0),\,\,\, G'(t)=0\,\,\,\text{for}\,\,\, t\in (2d_0,\tfrac12)\,;
\end{equation}
ii. 
\begin{equation}\label{E:A5}
\lim_{\varepsilon\to0+}\int_\varepsilon\! te^{-2G(t)}\,dt=\infty\,.
\end{equation}
Then $H$ is limit-point at 1.
\end{prop}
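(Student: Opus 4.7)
The plan is to deduce Proposition~\ref{P:A.2} directly from Lemma~\ref{L:A.1} by showing that, under the stronger lower bound $G'(t)\ge 1/(2t)$, the integral condition \eqref{E:A5} is equivalent to the dyadic summability condition \eqref{E:A3}. The remaining hypotheses of Lemma~\ref{L:A.1} follow either directly from those of Proposition~\ref{P:A.2} or by a harmless modification of $G$ outside a neighborhood of $0$ (for instance, replacing $G$ by $\tilde G$ that agrees with $G$ on $(0,d_0)$ and has $\tilde G'=0$ on $(d_0,1/2)$); any such modification alters $(G'(1-x))^2$ only on a compact subset of $(1/2,1)$ bounded away from $1$, so the change can be absorbed into $V_2$.

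For the equivalence of the integral and the sum, I would fix $\rho_0\le d_0/2$ and decompose
$$\int_0^{\rho_0} t\,e^{-2G(t)}\,dt=\sum_{n=0}^\infty\int_{2^{-n-1}\rho_0}^{2^{-n}\rho_0} t\,e^{-2G(t)}\,dt.$$
On each dyadic piece $I_n=(2^{-n-1}\rho_0,2^{-n}\rho_0]$, the factor $t$ is of order $2^{-n}\rho_0$ and the length of $I_n$ is $2^{-n-1}\rho_0$. The key observation is that $G$ varies by only a bounded amount across $I_n$: integrating the two-sided bound $\frac{1}{2t}\le G'(t)\le \frac{1}{t}$ yields
$$\tfrac12\ln 2\le G(2^{-n}\rho_0)-G(2^{-n-1}\rho_0)\le \ln 2,$$
and since $G$ is monotone increasing on $(0,d_0)$ this forces $e^{-2G(t)}$ to be comparable to $e^{-2G(2^{-n}\rho_0)}$ uniformly on $I_n$, up to a multiplicative factor between $1$ and $4$.

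Assembling, one obtains a two-sided bound of the form
$$C_1\rho_0^{\,2}\sum_{n=0}^\infty 4^{-n}e^{-2G(2^{-n}\rho_0)}\le \int_0^{\rho_0} t\,e^{-2G(t)}\,dt\le C_2\rho_0^{\,2}\sum_{n=0}^\infty 4^{-n}e^{-2G(2^{-n}\rho_0)},$$
so the divergence of the integral as $\varepsilon\to 0+$ is equivalent to the divergence of the series. Hypothesis (ii) of Proposition~\ref{P:A.2} thus implies hypothesis (ii) of Lemma~\ref{L:A.1} for every admissible $\rho_0$, and limit-point at $1$ follows.

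There is no substantive obstacle: the argument is a routine dyadic comparison, and the only piece requiring a little care is matching $G$ against the exact hypotheses of Lemma~\ref{L:A.1} on the outer region, which is handled by the harmless truncation mentioned above. The role of the lower bound $G'(t)\ge 1/(2t)$ (not present in Lemma~\ref{L:A.1}) is precisely to guarantee that $e^{-2G}$ does not drop by more than a constant factor across a dyadic scale, which is what makes the integral and the sum comparable.
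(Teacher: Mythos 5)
Your argument is correct and takes essentially the same route as the paper: both proofs reduce Proposition~\ref{P:A.2} to Lemma~\ref{L:A.1} by showing, via a dyadic decomposition of $(0,\rho_0)$, that $\int_0^{\rho_0} t e^{-2G(t)}\,dt$ and $\sum_n 4^{-n}e^{-2G(2^{-n}\rho_0)}$ are comparable. The one technical difference is how the comparison within each dyadic shell is obtained: the paper observes that \eqref{E:A4} forces $\frac{d}{dt}\bigl(te^{-2G(t)}\bigr)\le 0$, so that $te^{-2G(t)}$ is monotone decreasing and the sum in \eqref{E:A3}, rewritten as a Riemann sum with nodes $t_n=2^{-n}\rho_0$, sandwiches the integral with explicit constants $1/\rho_0^2$ and $2/\rho_0^2$. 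You instead integrate the two-sided bound on $G'$ across each shell to show that $e^{-2G}$ oscillates by a factor of at most $4$, which gives the same comparability with unspecified constants. Both implementations hinge on exactly the hypothesis $G'(t)\ge 1/(2t)$; the paper's monotonicity observation is a slightly slicker packaging of the same estimate. Your preliminary remark about truncating $G$ on $(d_0,1/2)$ to match the hypotheses of Lemma~\ref{L:A.1}, absorbing the discrepancy into $V_2$, is a legitimate detail that the paper leaves implicit, and it is handled correctly.
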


\begin{proof}
It follows from \eqref{E:A4} that
\begin{equation}\label{E:A6}
\frac{d}{dt}\bigl(te^{-2G(t)}\bigr)\le 0\,.
\end{equation}
Now let $\rho_0\le d_0/2$, as in the hypothesis of Lemma~\ref{L:A.1}, and denote $t_n=2^{-n} \rho_0$. Then the sum in \eqref{E:A3} equals
\begin{equation*}
\frac{1}{\rho_0^2}\sum_{n=1}^\infty t_n e^{-2G(t_n)} \bigl(t_{n-1}-t_n\bigr)
=\frac{2}{\rho_0^2}\sum_{n=1}^\infty t_n e^{-2G(t_n)} \bigl(t_{n}-t_{n+1}\bigr)\,.
\end{equation*}
Together with \eqref{E:A6}, this implies
\begin{equation*}
\frac{1}{\rho_0^2}\int_0^{\rho_0}\! te^{-2G(t)}\,dt\le \sum_{n=1}^\infty 4^{-n}e^{-2G(2^{-n}\rho_0)}
\le \frac{2}{\rho_0^2}\int_0^{\frac{\rho_0}{2}}\! te^{-2G(t)}\,dt\,,
\end{equation*}
showing in particular that \eqref{E:A5} implies \eqref{E:A3}, and completing the proof.
\end{proof}

The simplest choice for $G$ near $t=0$ is $G(t)=\ln t$, and it leads to the result of Theorem~X.10 in \cite{ReeSim}. 
%Another choice is
%\begin{equation*}
%G(t)=\ln t+\frac12\ln\ln\frac1t,
%\end{equation*}
%which leads to a weaker condition on $V_1$, namely
%\begin{equation*}
%V_1(x)\ge\frac{1}{x^2}\left(\frac34-\frac{1}{\ln\frac1x}+\frac14\,\frac{1}{\left(\ln\frac1x\right)^2}\right)\,.
%\end{equation*}
%Notice that already this case is not covered by  Hinton theorem (see Theorem 1 in \cite{H} , also Lemma 3.11 in\cite{GZ}) nor by 
%particularization to the one dimensional case of Theorem  in \cite{Bru} and  Main Theorem iii. in \cite{MMc}.
The choice used in the proof of Theorem~\ref{PSP} is of the form
\begin{equation}\label{Glnlnf}
G(t)=\ln t+\frac12\ln\ln\frac1t+\int_tf(u)du
\end{equation}
with $f(u) \ge 0$, $\lim_{u\rightarrow 0}uf(u)=0$ and $\lim_{t\rightarrow 0}\int_tf(u)du < \infty.$
We send the reader to \cite{NenNen} for more examples and a discussion of optimality. The case \eqref{Glnlnf} is not covered e.g. by Theorem~X.10 in \cite{ReeSim} or Lemma~3.11 in \cite{GZ}, nor by the particularization to the 1-dimensional case of Theorem 6.2 in \cite{Bru} or the Main Theorem iii. in \cite{MMc}. Concerning Hinton's Theorem (Theorem 1 in \cite{H}), while the choice $\eta(x)=x^{-1/2}$, given there for the case at hand, also does not cover \eqref{Glnlnf}, one can show that a better choice, $\eta(x)=\frac{1}{x^{1/2}\ln(1/x)^{1/2}}$ near $x=0$, does the job.

We close this appendix with a remark: in order to avoid technicalities, we have imposed smoothness conditions on $V$ and $G$ which are 
stronger than necessary. For example, the differentiability condition for $G$ can be relaxed.
 In fact, as can be seen in the classical Agmon paper~\cite{Agm} (see also~\cite{dVT}), it is sufficient to require that $G$ be 
Lipschitz continuous and the corresponding inequality be understood in an almost everywhere sense.

%%%%%%%%%%%%%%%%%%%%%%%%%%%%%%%%%%%%%%%%%%%%%%%%%%%%%%%%%%%%%%%%%%%%%%%%%%%%%%%%%%%%%%%
\section{A few facts about the transversal gauge}\label{S:G}
%%%%%%%%%%%%%%%%%%%%%%%%%%%%%%%%%%%%%%%%%%%%%%%%%%%%%%%%%%%%%%%%%%%%%%%%%%%%%%%%%%%%%%%%%%%%%%%

An important step in proving essential self-adjointness for our examples is the choice of the \emph{transversal} 
(or \emph{Poincar\'e}) gauge for the magnetic vector potential. The definition and properties of the transversal gauge used in 
this paper are well-known (see, e.g., \cite{Th}), and we provide the proofs for the reader's convenience.

More precisely, we have the following:
\begin{lemma}\label{L:3.0.1}
Let 
\begin{equation}\label{E:3.2.2a}
A(\vec x)=\int_0^1\!\!\! t\vec B(t\vec x)\,dt \wedge \vec x=\left(-x_2\int_0^1\!\! tB(t\vec x)\,dt,\quad x_1\int_0^1\!\! tB(t\vec x)\,dt\right)
\end{equation}
Then $A$ is a vector potential associated to the magnetic field $B$ and
\begin{equation}\label{E:3.2.3}
A(\vec x) \perp \vec x,\qquad |A(\vec x)|=|\vec x|\cdot \int_0^1\!\! tB(t\vec x)\,dt\,.
\end{equation}
\end{lemma}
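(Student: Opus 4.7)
The plan is to verify the three assertions by direct computation, treating the formula \eqref{E:3.2.2a} as an ansatz. Writing $a(\vec x) := \int_0^1 tB(t\vec x)\,dt$, the candidate potential becomes $A(\vec x) = \bigl(-x_2\, a(\vec x),\ x_1\, a(\vec x)\bigr)$.

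First I would dispatch the easy claims. Orthogonality is immediate from $A(\vec x)\cdot \vec x = -x_2 x_1 a(\vec x) + x_1 x_2 a(\vec x) = 0$. For the magnitude, $|A(\vec x)|^2 = (x_1^2 + x_2^2)\, a(\vec x)^2 = |\vec x|^2 a(\vec x)^2$; since $B \ge 0$ by \eqref{cond1} the integrand $a(\vec x)$ is nonnegative, so taking square roots produces $|A(\vec x)| = |\vec x| \int_0^1 tB(t\vec x)\,dt$.

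The substantive step is showing that $\partial_1 A_2 - \partial_2 A_1 = B(\vec x)$, i.e.\ that $A$ really is a vector potential for $B$. Differentiating under the integral sign (which is legitimate thanks to $B \in C^1(\ID)$), one obtains
\begin{equation*}
\partial_1 A_2 - \partial_2 A_1 = 2\!\int_0^1\!\! tB(t\vec x)\,dt + \int_0^1\!\! t^2\bigl(x_1 (\partial_1 B)(t\vec x) + x_2 (\partial_2 B)(t\vec x)\bigr)\,dt.
\end{equation*}
The key observation is that the full integrand on the right-hand side assembles into the total derivative
\begin{equation*}
\frac{d}{dt}\bigl[t^2 B(t\vec x)\bigr] = 2tB(t\vec x) + t^2\bigl(x_1(\partial_1 B)(t\vec x) + x_2 (\partial_2 B)(t\vec x)\bigr),
\end{equation*}
so the fundamental theorem of calculus yields $\partial_1 A_2 - \partial_2 A_1 = t^2 B(t\vec x)\bigr|_{t=0}^{t=1} = B(\vec x)$, as desired.

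There is no genuine obstacle here; the only thing to notice is the total-derivative structure, which is the standard mechanism underlying the Poincar\'e lemma and the very definition of the transversal (Poincar\'e) gauge. All other computations are algebraic manipulations on two-component vectors.
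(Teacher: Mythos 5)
Your proof is correct and follows essentially the same route as the paper: the orthogonality and magnitude claims are read off directly from the formula (using $B\ge0$), and the curl computation hinges on recognizing the integrand as the total derivative $\frac{d}{dt}\bigl[t^2B(t\vec x)\bigr]$, exactly as in the paper's argument.
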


\begin{proof}
The two claims in \eqref{E:3.2.3} follow immediately from \eqref{E:3.2.2a} and the assumption that $B\ge0$.
So all we need to show in order to complete the proof is that 
\begin{equation}\label{E:3.2.3.1}
B=\frac{\partial A_2}{\partial x_1}-\frac{\partial A_1}{\partial x_2}\,.
\end{equation}
But plugging in the definition \eqref{E:3.2.2a} of $A$, we get that
\begin{equation}
\begin{aligned}
\frac{\partial A_2}{\partial x_1}(\vec x)-\frac{\partial A_1}{\partial x_2}(\vec x)
&=2\int_0^1\! tB(t\vec x)\,dt\\
&\quad+\int_0^1\! t^2\bigl(x_1\partial_1B(t\vec x)+x_2\partial_2B(t\vec x)\bigr)\,dt\\
&=\int_0^1\! \frac{\partial}{\partial t}\bigl(t^2B(t\vec x)\bigr)\,dt=B(\vec x)\,,
\end{aligned}
\end{equation}
as claimed.
\end{proof}

Given this, we notice the importance of the quantity
\begin{equation}\label{E:3.2.4}
a(\vec x)=\int_0^1\! tB(t\vec x)\,dt:
\end{equation}

\begin{lemma}\label{L:3.0.2}
For smooth functions $a$ and $B$ on $|\vec x|<1$, we have that
\begin{equation}\label{E:3.2.5}
a(\vec x)=\int_0^1\!\! tB(t\vec x)\,dt\quad\Longleftrightarrow\quad B(\vec x)=2a(\vec x)+\vec x\cdot\nabla a(\vec x)\,.
\end{equation}
Furthermore, $a$ is radially symmetric iff $B$ is, and in this case the equivalence \eqref{E:3.2.5} becomes
\begin{equation}\label{E:3.2.6}
a(r)=\int_0^1\!\! tB(tr)\,dt \quad\Longleftrightarrow\quad B(r)=2a(r)+ra'(r)\,.
\end{equation}
\end{lemma}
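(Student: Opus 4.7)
The plan is to prove the two displayed equivalences by recognizing the right-hand sides as the outcome of an integration-by-parts/chain-rule identity. The key observation is that for any smooth function $f$ on the disc, the map $t\mapsto t^2 f(t\vec x)$ has derivative $2t f(t\vec x) + t^2\,\vec x\cdot(\nabla f)(t\vec x)$, so the expression $2a+\vec x\cdot\nabla a$ is precisely what one gets from differentiating $t^2 a(t\vec x)$ and evaluating between the correct endpoints.

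For the forward direction of the first equivalence, I would assume $a(\vec x)=\int_0^1 t B(t\vec x)\,dt$ and differentiate under the integral sign to obtain $\vec x\cdot\nabla a(\vec x)=\int_0^1 t^2\,(\vec x\cdot\nabla B)(t\vec x)\,dt$. Adding $2a(\vec x)=\int_0^1 2tB(t\vec x)\,dt$ and recognizing that the integrand is $\frac{d}{dt}\bigl(t^2B(t\vec x)\bigr)$, the fundamental theorem of calculus immediately yields $2a(\vec x)+\vec x\cdot\nabla a(\vec x)=B(\vec x)$. (This is essentially the same chain-rule identity used at the end of the proof of Lemma~\ref{L:3.0.1}.) For the reverse direction, I would assume $B=2a+\vec x\cdot\nabla a$ and let $g(t)=t^2 a(t\vec x)$. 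Then $g'(t)=2t\,a(t\vec x)+t^2(\vec x\cdot\nabla a)(t\vec x)=t\bigl[2a(t\vec x)+(t\vec x)\cdot\nabla a(t\vec x)\bigr]=tB(t\vec x)$; integrating from $0$ to $1$ gives $a(\vec x)=g(1)-g(0)=\int_0^1 tB(t\vec x)\,dt$.

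For the radial symmetry statement, if $B$ depends only on $|\vec x|$, the integral $\int_0^1 tB(t\vec x)\,dt=\int_0^1 tB(tr)\,dt$ manifestly depends only on $r=|\vec x|$, so $a$ is radial. Conversely, if $a(\vec x)=a(r)$, then $\vec x\cdot\nabla a(\vec x)=r\,a'(r)$, so $B=2a+\vec x\cdot\nabla a=2a(r)+ra'(r)$ is radial as well; the specialization of the equivalence to the radial case then follows directly from the first part.

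There is no serious obstacle here: the whole argument is one clean application of the chain rule together with the recognition of a total $t$-derivative. The only mild subtlety is ensuring that differentiation under the integral sign is legitimate, which is immediate from the $C^1$ smoothness assumption on $B$ (and hence on $a$) on the closed disc of integration $\{t\vec x: t\in[0,1]\}$ for fixed $\vec x$ with $|\vec x|<1$.
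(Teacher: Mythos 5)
Your proof is correct and follows essentially the same route as the paper: the forward direction reproduces the total $t$-derivative computation from the proof of Lemma~\ref{L:3.0.1}, and the reverse direction is the same chain-rule identity applied to $g(t)=t^2a(t\vec x)$, with the radial specialization handled exactly as in the paper via $\vec x\cdot\nabla a=ra'(r)$. Your write-up merely spells out what the paper compresses into ``two applications of the calculation from the proof of Lemma~\ref{L:3.0.1}.''
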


\begin{proof}
The proof of the equivalence from \eqref{E:3.2.5} consists of two applications of the calculation from the proof of 
Lemma~\ref{L:3.0.1}. Now, if $B$ is radially symmetric, and we set $r=|\vec x|$, then we see directly from \eqref{E:3.2.5} that $a$ is also, and
$$
a(r)=\int_0^1\!\! tB(tr)\,dt\,.
$$
Conversely, if $a$ is radially symmetric, notice that
\begin{equation*}
\vec x\cdot \nabla a(\vec x)=ra'(r)\,,
\end{equation*}
and so it follows again from \eqref{E:3.2.5} that $B$ is radially symmetric also, and
$$
B(r)=2a(r)+ra'(r)\,.
$$
\end{proof}

Finally, it is important to rewrite the Hamiltonian in polar coordinates:
\begin{lemma}\label{L:3.1}
Let $B$ be a (smooth) magnetic field on $\ID$, $A$ the vector potential in the transversal gauge (i.e., 
chosen as in Lemma~\ref{L:3.0.1}), and $a$ as in \eqref{E:3.2.5}. Then, if $(r,\theta)$ are the polar coordinates associated to the 
rectangular coordinates $\vec x=(x_1,x_2)$, we have that
\begin{equation}\label{E:3.2.7}
\bigl(-i\nabla-A\bigr)^2=-\frac1r\,\frac{\partial}{\partial r}\,r\,\frac{\partial}{\partial r}+
\left(-\frac{i}{r}\,\frac{\partial}{\partial\theta}-ra(r,\theta)\right)^2\,.
\end{equation}
\end{lemma}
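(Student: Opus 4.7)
The plan is to reduce the lemma to a direct expansion of both sides in polar coordinates, exploiting the special structure of the transversal gauge. The essential feature, inherited from Lemma~\ref{L:3.0.1}, is that $A\perp\vec x$, so in the orthonormal polar frame $\hat r=(\cos\theta,\sin\theta)$, $\hat\theta=(-\sin\theta,\cos\theta)$, the potential has no radial component: $A_r=0$, $A_\theta=r a(r,\theta)$. This is what will make the splitting into radial and angular pieces clean.

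First I would recall the componentwise identity
\begin{equation*}
(-i\nabla-A)^2=-\Delta+i(\nabla\cdot A)+2iA\cdot\nabla+|A|^2,
\end{equation*}
obtained by expanding $(-i\partial_j-A_j)^2$ and summing in $j$. Then I would translate each term into polar coordinates using the well known formulas
\begin{equation*}
\Delta=\frac1r\frac{\partial}{\partial r}r\frac{\partial}{\partial r}+\frac{1}{r^2}\frac{\partial^2}{\partial\theta^2},\qquad \nabla\cdot A=\frac1r\frac{\partial}{\partial r}(rA_r)+\frac1r\frac{\partial A_\theta}{\partial\theta},\qquad A\cdot\nabla=A_r\frac{\partial}{\partial r}+\frac{A_\theta}{r}\frac{\partial}{\partial\theta}.
\end{equation*}
Plugging in $A_r=0$ and $A_\theta=ra(r,\theta)$ collapses these to $\nabla\cdot A=\partial_\theta a$, $A\cdot\nabla=a\,\partial_\theta$, and $|A|^2=r^2 a^2$, giving
\begin{equation*}
(-i\nabla-A)^2=-\frac1r\,\frac{\partial}{\partial r}r\,\frac{\partial}{\partial r}-\frac{1}{r^2}\frac{\partial^2}{\partial\theta^2}+i(\partial_\theta a)+2ia\,\partial_\theta+r^2 a^2.
\end{equation*}

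Second, I would expand the right hand side of \eqref{E:3.2.7} by writing out
\begin{equation*}
\left(-\frac{i}{r}\frac{\partial}{\partial\theta}-ra\right)^{\!2}\!\psi=-\frac{1}{r^2}\partial_\theta^2\psi+\frac{i}{r}\partial_\theta(ra\psi)+ia\,\partial_\theta\psi+r^2a^2\psi,
\end{equation*}
and use the Leibniz rule $\frac{i}{r}\partial_\theta(ra\psi)=i(\partial_\theta a)\psi+ia\,\partial_\theta\psi$ (since $r$ does not depend on $\theta$) to rewrite this as $-\frac{1}{r^2}\partial_\theta^2+2ia\,\partial_\theta+i(\partial_\theta a)+r^2a^2$. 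Comparing term by term with the expression obtained in the previous paragraph gives the claimed identity. The polar form of $\Delta$ also provides a cross check against the radial part on both sides.

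The only real pitfall is bookkeeping for operator ordering: the factor of $r$ inside $A_\theta=ra$ together with the $1/r$ in the angular derivative conspire to produce the $\partial_\theta a$ term twice on the expanded side (once from the cross product, once from $\nabla\cdot A$), which is exactly what matches the expansion of the square. Once the transversal condition $A_r=0$ is used, the rest is algebra, so I expect the derivation to be short; the main thing to be careful about is that the polar formulas for $\nabla\cdot A$ and $A\cdot\nabla$ include the curvature factors from $r\,dr\,d\theta$, and these must be applied correctly when $A_\theta$ itself carries a factor of $r$.
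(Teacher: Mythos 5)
Your proof is correct and follows essentially the same route as the paper's: expand the square $(-i\nabla-A)^2$ into $-\Delta + i(\nabla\cdot A) + 2iA\cdot\nabla + |A|^2$, then pass to polar coordinates and use the transversal condition $A_r=0$, $A_\theta = ra(r,\theta)$. The paper phrases the key step in Cartesian form by computing $(-x_2,\,x_1)\cdot\nabla = \partial_\theta$, which is exactly the observation you encode by using the polar formulas for $\nabla\cdot A$ and $A\cdot\nabla$ with $A_r=0$; the resulting bookkeeping and the final comparison of terms are identical.
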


\begin{proof}
The identity \eqref{E:3.2.7} follows by a direct change of variables. First recall that
\begin{equation*}
A(\vec x)=a(\vec x)\,\bigl(-x_2\quad x_1\bigr)\,,
\end{equation*}
so by expanding the square we see that
\begin{equation*}
\bigl(-i\nabla-A\bigr)^2=-\Delta+ia\,(-x_2\quad x_1)\cdot\nabla+i(-x_2\quad x_1)\cdot\nabla\,a+|\vec x|^2 a^2\,.
\end{equation*} 
The standard expression in polar coordinates of the Laplacian in 2 dimensions is
\begin{equation*}\label{E:3.2.8}
\Delta=\frac{\partial^2}{\partial r^2}+\frac1r\,\frac{\partial}{\partial r}+\frac{1}{r^2}\,\frac{\partial^2}{\partial\theta^2}\,,
\end{equation*} 
and we notice that
\begin{equation*}\label{E:3.2.9}
\begin{aligned}
(-x_2\quad x_1)\cdot\nabla
&=-r\sin\theta\left(\cos\theta\,\frac{\partial}{\partial r}-\frac{\sin\theta}{r}\,\frac{\partial}{\partial\theta}\right)+
r\cos\theta\left(\sin\theta\,\frac{\partial}{\partial r}+\frac{\cos\theta}{r}\,\frac{\partial}{\partial\theta}\right)\\
&=\sin^2\theta\,\frac{\partial}{\partial\theta}+\cos^2\theta\,\frac{\partial}{\partial\theta}=\frac{\partial}{\partial\theta}\,.
\end{aligned}
\end{equation*}
Plugging these two expressions into the expanded square above leads directly to~\eqref{E:3.2.7}.
\end{proof}

%%%%%%%%%%%%%%%%%%%%%%%%%%%%%%%%%%%%%%%%%%%%%%%%%%%%%%%%%%%%%%%%%

\end{document}